\newtheorem{theorem}{Theorem}
\newtheorem{lemma}{Lemma}
\begin{document}
\title{\bf Scheduling of unit-length jobs with cubic incompatibility graphs on three uniform machines\footnote{This project has been 
partially supported by Narodowe Centrum Nauki under contract 
DEC-2011/02/A/ST6/00201}}
\author{Hanna Furma\'nczyk\footnote{Institute of Informatics,\ University of Gda\'nsk,\ Wita Stwosza 57, \ 80-952 Gda\'nsk, \ Poland. \ e-mail: hanna@inf.ug.edu.pl},  \ Marek 
Kubale\footnote{Faculty of Electronics, Telecommunications and Informatics,\ Technical University of Gda\'nsk,\ Narutowicza 11/12, \ 80-233 Gda\'nsk, \ Poland. \ e-mail: kubale@eti.pg.gda.pl}}
\date{}

\maketitle
\begin{abstract}
In the paper we consider the problem of scheduling $n$ identical jobs on 3 uniform machines with speeds $s_1, s_2,$ and $s_3$ to 
minimize the schedule length. We assume that jobs are subjected to some kind of mutual exclusion constraints, modeled by a cubic incompatibility graph. 
We show that if the graph is 2-chromatic then the problem can be solved in $O(n^2)$ time. If the graph 
is 3-chromatic, the problem becomes NP-hard even if $s_1>s_2=s_3$. However, in this case there exists a $4/3$-approximation algorithm running in $O(n^3)$
time. Moreover, this algorithm 
solves the problem almost surely to optimality if $3s_1/4 \leq s_2 = s_3$.
\end{abstract}
{\bf Keywords:} {cubic graph, equitable coloring, NP-hardness, polynomial algorithm, scheduling, uniform machine}
\section{Introduction}
Imagine you have to arrange a dinner for, say 30, people and you have at your disposal 3 round tables with different numbers of seats 
(not greater than 15). 
You know that each of your guests is in bad relations with exactly 3 other people. Your task is to assign the people to the tables in such a way that no 
two of them being in bad relations seat at the same table. In the paper we show how to solve this and related problems.

Our problem can be expressed as the following scheduling problem. Suppose we have $n$ identical jobs $j_1,\ldots, j_n$, so we assume that they all 
have unit execution times, in symbols $p_i=1$, to be processed on three non-identical machines $M_1, M_2,$ and $M_3$. These machines run at different speeds $s_1, s_2,$ and $s_3$, respectively. 
However, they are \emph{uniform} in the sense that if a job is executed on machine $M_i$, it takes $1/s_i$ time units to be completed. It refers to the 
situation where the machines are of different generations, e.g. old and slow, new and fast, etc.

Our scheduling model would be trivial if all the jobs were compatible. Therefore we assume that some pairs of jobs cannot be processed on the same machine 
due to some technological constraints. More precisely, we assume that each job is in conflict with exactly three other jobs. Thus the underlying \emph{incompatibility graph} $G$ whose 
vertices are jobs and edges correspond to pairs of jobs being in conflict is cubic. For example, all figures in this paper comprise cubic graphs. By the handshaking 
lemma, the number of 
jobs $n$ must be even. A load $L$ on machine $M_i$ requires the processing time $P_i(L)= |L|/s_i$, and all jobs are ready for 
processing at the same time. 
By definition, each load forms an independent set (color) in $G$. Therefore, in what follows we will be using the terms 
job/vertex and color/independent set interchangeably. Since all tasks have to be executed, the problem is to find a 3-coloring, i.e. a decomposition of $G$ into 3 
independent sets 
$I_1, I_2,$ and $I_3$ such that the schedule length $C_{\max} = \max\{P_i(I_i): i=1,2,3\}$ is minimized, in symbols $Q3|p_i=1, G=cubic|C_{\max}$. 

In this paper we assume three machines for the following reason. If there is only one machine then there is no solution. If there are two machines, the problem becomes trivial 
because it is solvable only if $G$ is bipartite and it has only one solution since there is just one decomposition of $G$ into sets $I_1$ and $I_2$, each 
of size $n/2$. If, however, there are three machines and $G$ is 3-chromatic, our problem becomes NP-hard. Again, if $G$ is 4-chromatic, there is no solution.

There are several papers devoted to scheduling in the presence of mutual exclusion constraints. Boudhar in \cite{boudhar1,boudhar2} studied the problem 
of batch scheduling
with complements of bipartite and split 
graphs, respectively. Finke et al. \cite{fjqs} considered the 
problem with complements of interval graphs. Our problem can also be viewed as  a particular variant of scheduling with conflicts \cite{conf}.
In all the papers the authors assumed identical parallel machines. However, to the best of our knowledge little work has 
been done on scheduling problems with uniform machines involved (cf. Li and Zhang \cite{li_zhang}).  

The rest of this paper is split into two parts depending on the chromaticity of cubic graphs. In Section \ref{sec2} we consider 2-chromatic graphs. In particular, we give an $O(n^2)$-time algorithm for optimal scheduling of such graphs. Section \ref{sec3} is devoted to 3-chromatic graphs. In particular, we give an NP-hardness proof and an approximation algorithm with good performance guarantee. Our algorithm runs in $O(n^3)$ time to produce a solution of value less than 4/3 times optimal, provided that $s_1> s_2= s_3$.
Moreover, this algorithm solves the problem almost surely to optimality if $3s_1/4=s_2=s_3$. Finally, we discuss possible extensions of our model to disconnected graphs.

\section{Scheduling of 2-chromatic graphs}\label{sec2}

We begin with introducing some basic notions concerning graph coloring. A graph $G = (V,E)$ is said to be \emph{equitably $k$-colorable} if and 
only if its vertex set can be partitioned into independent sets $V_1, \ldots, V_k \subset V$ such that $||V_i| - |V_j|| \leq 1$ for all 
$i, j = 1, \ldots, k$. The smallest $k$ for which $G$ admits such a coloring is called the \emph{equitable chromatic number} of $G$ and 
denoted $\chi_=(G)$. Graph $G$ has a \emph{semi-equitable coloring}, if there exists a partition of its vertices into independent sets 
$V_1,\ldots, V_k \subset V$ such that one of these subsets, say $V_i$, is of size $\notin \{\lfloor n/k\rfloor, \lceil n/k \rceil\}$, and the remaining subgraph $G-V_i$ is equitably $(k-1)$-colorable.

Let us recall some basic facts concerning colorability of cubic graphs. It is well known from Brooks theorem \cite{brooks}
that for any cubic graph $G \neq K_4$ we have $\chi(G) \leq 3$. On the other hand, Chen et. al. \cite{clw} proved that every 3-chromatic cubic 
graph can be equitably colored without introducing a new color. Moreover, since a connected cubic graph $G$ with $\chi(G) = 2$ is a bipartite 
graph with partition sets of equal size, we have the equivalence of the classical and equitable chromatic numbers for 2-chromatic cubic graphs. 
Since the only cubic graph for which the chromatic number is equal to 4 is the complete graph $K_4$, we have

\begin{equation}
2 \leq \chi_=(G) = \chi(G) \leq 4	\label{eq1}
\end{equation}
for any cubic graph. Moreover, from (\ref{eq1}) it follows that for any cubic graph $G \neq K_4$, we have

\begin{equation}
n/3 \leq \alpha(G) \leq n/2
\end{equation}
where $\alpha(G)$ is the independence number of $G$. Note that the upper bound is tight only if $G$ is bipartite.

Let $\mathcal{Q}_k$ denote the class of connected $k$-chromatic cubic graphs and let $\mathcal{Q}_k(n) \subset \mathcal{Q}_k$ stand for the 
subclass of cubic graphs on $n$ vertices, $k = 2,3,4$. Clearly, $\mathcal{Q}_4 = \{K_4\}$. In what follows we will call the graphs belonging to 
$\mathcal{Q}_2$ \emph{bicubic}, and the graphs belonging to $\mathcal{Q}_3$ - \emph{tricubic}. 

As mentioned, if $G$ is bicubic then any 2-coloring of it is equitable and there may be no equitable 3-coloring (cf. $K_{3,3}$). On the other 
hand, all graphs in $\mathcal{Q}_2(n)$ have a semi-equitable 3-coloring of type $(n/2, \lceil n/4 \rceil,\lfloor n/4 \rfloor)$. Moreover, they 
are easy colorable in linear time while traversing in a depth-first search (\texttt{DFS}) manner. 

Let $s_i$ be the speed of machine $M_i$ for $i = 1,2,3$, and let $s = s_1+s_2+s_3$. Without loss of generality we assume that $s_1 \geq s_2 \geq s_3$. 
If there are just 6 jobs to schedule then the incompatibility graph $G = K_{3,3}$ and there is only one decomposition of it into 3 independent 
sets shown in Fig.~\ref{rys1}(a), and there is only one decomposition of $G$ into 2 independent sets shown in Fig.~\ref{rys1}(b), of course up to isomorphism. The length of minimal 
schedule is $\min\{\max\{3/s_1, 2/s_2, 1/s_3 \}, 3/s_2\}$. Therefore, we assume that our graphs have at least 8 vertices.

Notice that if $s_1 \geq s_2 + s_3$ then as many as possible jobs should be placed on $M_1$. The maximal number of jobs on the first machine is $n_1 = n/2$. The remaining $n/2$ jobs should be assigned to $M_2$ and $M_3$ in quantities proportional to their speeds, more precisely 
in quantities $n_2 =\lceil .5ns_2/(s_2+ s_3)\rceil$ and $n_3 = n/2- n_2$, respectively. If $s_1 < s_2+ s_3$ then the number of jobs on machine 
$M_i$ should be proportional to its speed $s_i$. In such an ideal case the total processing times of all the loads would be the same. However, the 
numbers of jobs on machines must be integer. Therefore, we must check which of the three variants of a schedule, i.e. with round-up and/or round-down 
on $M_1$ and $M_2$, guarantees a better solution. This leads to the following algorithm for optimal scheduling of bicubic graphs.

\begin{figure}[htb]
\begin{center}
\includegraphics[scale=1]{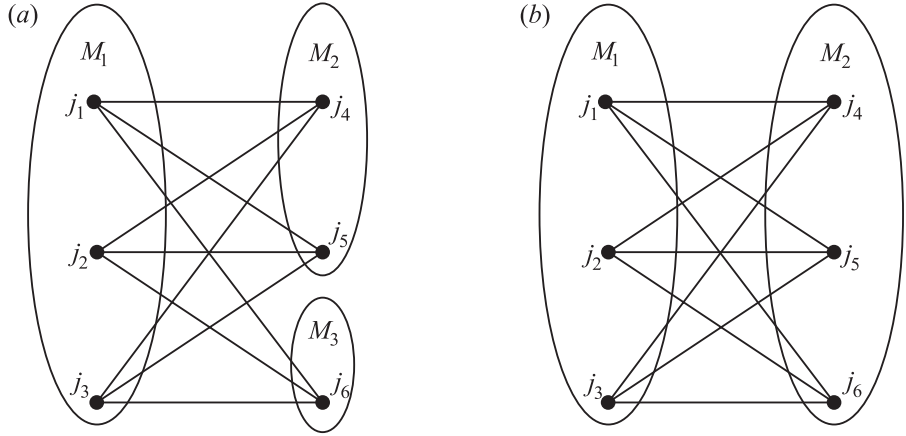} 
\caption{Two decompositions of $K_{3,3}$: (a) into 3 independent sets, (b) into 2 independent sets.}
\label{rys1}
\end{center}
\end{figure}

\begin{algorithm}
\caption{Scheduling of bicubic graphs}
\begin{algorithmic}
\Require {Graph $G \in \mathcal{Q}_2(n)$, $G \neq K_{3,3}$ and machine speeds $s_1, s_2, s_3$ such that $s_1 \geq s_2 \geq s_3$.}
\Ensure {Optimal schedule.}
\begin{enumerate}
\item If $s_1 < s_2+ s_3$ then go to Step 5.
\item Find an $(I, J)$-coloring of graph $G$.
\item Split color $J$ into 2 subsets: $B$ of size $n_2 =\lceil .5ns_2/(s_2+ s_3)\rceil$ and $C$ of size $n_3 = n/2- n_2$.
\item Assign $M_1 \leftarrow I$, $M_2 \leftarrow B$, $M_3 \leftarrow C$ and stop.
\item Calculate approximate numbers of jobs $(n_1, n_2, n_3)$ to be processed on $M_1, M_2, M_3$ in an ideal schedule, as follows:
$$n_1 = ns_1/s, n_2 = ns_2/s, n_3 = ns_3/s, \text{ where } s = s_1+ s_2+ s_3.$$
\item Verify which of the following types of colorings:
$$(\lfloor n_1\rfloor,\lceil n_2 \rceil, n-\lfloor n_1\rfloor- \lceil n_2 \rceil), \ 
(\lceil n_1 \rceil, \lfloor n_2\rfloor, n-\lceil n_1 \rceil -\lfloor n_2\rfloor) \text{ or } (\lceil n_1 \rceil, \lceil n_2\rceil, n-\lceil n_1 \rceil -\lceil n_2\rceil)$$
guarantees a better solution and call it OPT.
\item Let $(A, B, C)$ be a coloring of $G$ realizing OPT obtained by using a modified \texttt{CLW} method described in \texttt{Procedure 1}.
\item Assign $M_1 \leftarrow A$, $M_2 \leftarrow B$, $M_3 \leftarrow C$.
\end{enumerate}
\end{algorithmic}
\end{algorithm}

A crucial point of \texttt{Algorithm 1} is Step 7 where we use a modified procedure due to Chen et al. \cite{clw}, which we call a \texttt{CLW} procedure. 
This procedure was used by them to prove that every tricubic graph can be equitably colored without introducing a new color.  
\texttt{CLW} relies on successive decreasing the \emph{width} of coloring, i.e. the difference between the cardinality of the largest and smallest independent 
set, one by one until a coloring is equitable. Actually, their procedure works for every 3-coloring of any bicubic graph, except for $K_{3,3}$. 
More precisely, in Step 7 of \texttt{Algorithm 1}, where we want to receive an $(A,B,C)$-coloring (named as OPT) with cardinalities of color classes $|A| \geq |B| \geq |C|$,  we have 
to start with 2-coloring of bicubic graph $G$: $(I,J)$-coloring. Next we split the color class $J$ into two: $B$ of cardinality $|B|$ and $C'$ 
of size $n/2-|B|$. Hence, we initially have $(A', B, C')$-coloring with $|A'|=n/2$, $C'=n/2-|B|$, where the largest class is clearly $A'$, 
while the smallest class is $C'$. If this coloring with the width of $|B|$ is not the desirable $(A,B,C)$-coloring with the width of $|A|-|C|$, 
then we use \texttt{CLW} for decreasing the width from $|B|$ to $|A-C|$. Let us notice that such a width decreasing step is applied only to the first and 
the third color class, without changing the cardinality of the second class which is still equal to $|B|$. The whole modified \texttt{CLW} procedure 
is given below as \texttt{Procedure 1}. The complexity of modified \texttt{CLW} is the same as the complexity of the original \texttt{CLW} procedure for making 
any 3-coloring of tricubic graph equitable, namely $O(n^2)$. This is so because the part of the algorithm responsible for decreasing the width 
of coloring by one may be done in linear time. In the nutshell, we first check if there is a pair of vertices one from the largest and the 
other from the smallest class whose colors can be simply swapped. If there is no such pair, we have to consider such a bipartite subgraph 
that swapping the vertices between its partition sets (possibly with another subset being involved in the swapping) results in decreasing the 
width of coloring. Since this step must be repeated at most $n/6$ times, the complexity of modified \texttt{CLW} procedure follows. This 
complexity dominates the running time of  \texttt{Algorithm 1}.

\floatname{algorithm}{Procedure}
\setcounter{algorithm}{0}
\begin{algorithm}
\caption{Modified \texttt{CLW} algorithm}
\begin{algorithmic}
\Require {Graph $G\in \mathcal{Q}_2(n)$, $G \neq K_{3,3}$ and integers $a \geq b \geq c$ such that $a+b+c = n$.}
\Ensure {$(A,B,C)$-coloring of $G$ such that $|A|=a, |B|=b, |C|=c$.}
\begin{enumerate}
\item Find an $(I, J)$-coloring of graph $G$.
\item Split $J$ into 2 subsets: $B$ of size $b$ and $C’$ of size $n/2-b$.
\item While $|C’| < c$ do 

decrease the width of coloring by one using the \texttt{CLW} method \cite{clw}.
\end{enumerate}
\end{algorithmic}
\end{algorithm}

The above considerations lead us to the following
\begin{theorem}
\emph{\texttt{Algorithm 1}} runs in $O(n^2)$ time to produce an optimal schedule. \hfill $\Box$
\end{theorem}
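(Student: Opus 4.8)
The plan is to prove two independent claims: that \texttt{Algorithm 1} always returns a schedule of minimum length, and that it runs in $O(n^2)$ time. I would dispose of the running time first, since it is the easier half. Step 2 (and Step 1 of \texttt{Procedure 1}) produces an $(I,J)$-coloring by a single depth-first traversal of the bipartite graph $G$; as $G$ is cubic it has $3n/2$ edges, so this costs $O(n)$. In Case 1 (Steps 2--4) the only remaining work is to split $J$ and output the assignment, which is linear, so this branch runs in $O(n)$. In Case 2 the arithmetic of Steps 5--6 is $O(1)$, the final assignment is $O(n)$, and the whole cost is concentrated in Step 7, i.e.\ in \texttt{Procedure 1}. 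By the complexity analysis already given for the modified \texttt{CLW} method---each single unit decrease of the width is performed in linear time and the width is decreased at most $n/6$ times---Step 7 takes $O(n^2)$. Summing over the branches gives the claimed $O(n^2)$ bound, dominated by Step 7.

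For optimality I would first reduce the scheduling question to a purely numerical one. Since assigning a fixed multiset of load sizes to uniform machines so as to minimise the makespan is optimised by placing the largest load on the fastest machine (a standard pairwise-exchange argument), and since every triple $(a,b,c)$ with $a+b+c=n$ and $\max\{a,b,c\}\le\alpha(G)=n/2$ is realisable on any $G\in\mathcal{Q}_2(n)\setminus\{K_{3,3}\}$ by \texttt{Procedure 1}, the problem becomes: minimise $\max\{n_1/s_1,n_2/s_2,n_3/s_3\}$ over integer triples with $n_1+n_2+n_3=n$, $n_i\le n/2$, and $s_1\ge s_2\ge s_3$. Relaxing integrality gives the ideal point $n_i^{*}=ns_i/s$, at which all three machines finish simultaneously at time $n/s$; this is the fundamental lower bound $C_{\max}\ge n/s$ for every feasible schedule. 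The observation that splits the analysis into the two branches of the algorithm is that $n_1^{*}=ns_1/s\le n/2$ if and only if $s_1\le s_2+s_3$: the ideal point is feasible exactly in Case 2, whereas in Case 1 it violates the cap $n_1\le n/2$.

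In Case 1 I would argue that any optimal schedule saturates the fast machine at $n_1=n/2$. Indeed, for any feasible triple $n_2+n_3\ge n/2$, so $\max\{n_2/s_2,n_3/s_3\}\ge (n_2+n_3)/(s_2+s_3)\ge (n/2)/(s_2+s_3)\ge (n/2)/s_1$; hence the load on $M_1$ never determines the makespan, and loading $M_1$ to its maximum only relaxes $M_2$ and $M_3$. With $n_1=n/2$ fixed, the remaining $n/2$ jobs must be split to minimise $\max\{n_2/s_2,n_3/s_3\}$, whose continuous balance point is $n_2^{*}=0.5\,ns_2/(s_2+s_3)$, and I would verify that the rounding chosen in Step 3 attains the integer minimum. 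In Case 2 the ideal point is feasible, so an optimal integer triple must be one of the finitely many roundings of $(n_1^{*},n_2^{*},n_3^{*})$ whose coordinates still sum to $n$; writing $f_i=\{n_i^{*}\}$ and noting $f_1+f_2+f_3\in\{0,1,2\}$, exactly one coordinate is rounded up when the sum is $1$ and exactly one down when the sum is $2$. One then shows that the three types listed in Step 6 are precisely the candidates that keep the triple sorted and round up on the faster machines, so that the algorithm's ``choose the best of the three'' step selects a global optimum, realised by \texttt{Procedure 1}.

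The main obstacle is this last rounding argument. The delicate point is to prove that no integer triple outside the three tabulated types can beat all of them: rounding up a machine $i$ raises its finishing time to $n/s+(1-f_i)/s_i$ while the down-rounded machines drop below $n/s$, so the comparison between candidates reduces to comparing the quantities $(1-f_i)/s_i$, which mix the speeds with the fractional parts and are not monotone in $i$ alone. One must use the ordering $s_1\ge s_2\ge s_3$ together with the feasibility cap $n_i\le n/2$ (and, in borderline cases, re-sort the rounded triple before assigning it to the machines) to rule out the excluded roundings. The same care is needed to pin down the correct rounding in Step 3 of Case 1.
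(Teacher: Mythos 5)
The running-time half of your proposal is sound and coincides with the paper's own reasoning: the cost is dominated by Step 7, where the modified \texttt{CLW} procedure performs at most $n/6$ width-decreasing rounds at $O(n)$ each, giving $O(n^2)$. The gap is in the optimality half, and it sits exactly where you yourself locate ``the main obstacle'': you reduce everything to choosing the best integer rounding of the ideal loads $n_i^*=ns_i/s$ (a correct and cleaner framing than the paper's informal discussion), but you never prove that the roundings hard-wired into the algorithm --- the ceiling in Step 3 and the three types listed in Step 6 --- attain the integer optimum; you only announce that one ``then shows'' it. That verification is the entire mathematical content of the optimality claim, and it cannot be carried out, because for the algorithm as literally written it is false.

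Concretely, for Case 1 take $s_1=5$, $s_2=3$, $s_3=2$ (so $s_1=s_2+s_3$) and $n=14$. Step 3 sets $n_2=\lceil 0.5\cdot 14\cdot 3/5\rceil=\lceil 4.2\rceil=5$, $n_3=2$, giving makespan $\max\{7/5,\,5/3,\,2/2\}=5/3$; but the triple $(7,4,3)$, obtained by simply splitting $J$ into parts of sizes $4$ and $3$, is feasible on every bicubic graph and gives $\max\{7/5,\,4/3,\,3/2\}=3/2<5/3$. Indeed, rounding $n_2^*$ up costs $(1-f)/s_2$ above the balance point while rounding down costs $f/s_3$, so the floor wins whenever the fractional part $f$ satisfies $f<s_3/(s_2+s_3)$, which happens here ($f=0.2<0.4$). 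Case 2 fails similarly: Step 6 omits the variant $(\lfloor n_1\rfloor,\lfloor n_2\rfloor,\lceil n_3\rceil)$, yet for $n=10$ and speeds $(41,31,28)$ one has $(n_1^*,n_2^*,n_3^*)=(4.1,3.1,2.8)$; the three listed types give $(4,4,2)$, $(5,3,2)$, $(5,4,1)$ with best makespan $5/41$, while the omitted $(4,3,3)$ achieves $3/28<5/41$. So your plan stalls at precisely the step you postponed, and no argument can complete it without first repairing the algorithm (e.g., testing all sorted roundings, including the all-floor one). For what it is worth, the paper never performs this verification either --- the theorem is stated with an empty proof resting on the informal ``considerations'' preceding it --- so you have put your finger on a genuine defect; but as a proof of the stated theorem, your proposal has a hole exactly where the proof would have to do its work.
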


\section{Scheduling of 3-chromatic graphs}\label{sec3}

First of all notice that if $s_1= s_2= s_3$ then the scheduling problem becomes trivial since any equitable coloring of $G$ solves the problem 
to optimality. Therefore we assume that only two possible speeds are allowed for machines to run, more precisely that $s_1> s_2=s_3$. As 
previously, if there are just 6 jobs to schedule then the incompatibility graph $G = P$, where $P$ is the prism shown in Fig.~\ref{rys2}. 
There is only one decomposition of $P$ into 3 independent sets and the length of minimal schedule is $2/s_2$. Therefore, we assume that our 
graphs have at least 8 vertices.

\begin{figure}[htb]
\begin{center}
\includegraphics[scale=0.8]{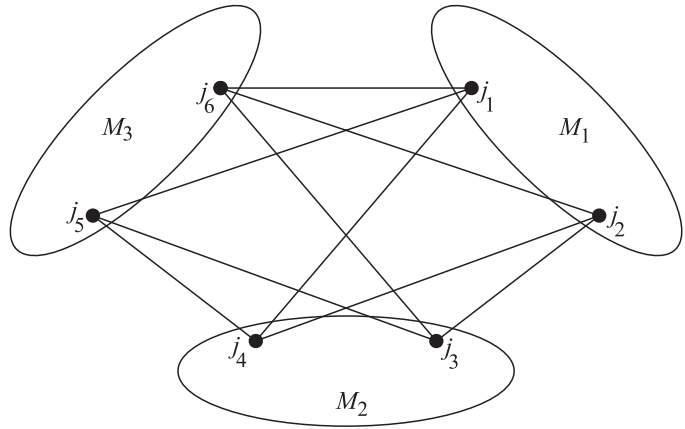} 
\caption{The prism $P$ and its decompositions into 3 independent sets.}
\label{rys2}
\end{center}
\end{figure}

In the following we take advantage of the following
\begin{lemma}[Furma\'nczyk, Kubale \cite{fk}]
Let $G\in \mathcal{Q}_3(n)$ and let $k=n/10$, where $10|n$. The problem of deciding whether $G$ has a semi-equitable coloring of type 
$(4k,3k,3k)$ is \emph{NP}-complete.   \hfill $\Box$\label{lm1}
\end{lemma}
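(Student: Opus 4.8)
First, membership in NP is immediate. A semi-equitable coloring of type $(4k,3k,3k)$ is nothing more than a proper $3$-coloring of $G$ whose three color classes have sizes $4k$, $3k$, and $3k$: removing the off-size class $A$ with $|A|=4k$ leaves the two remaining classes as an equitable $2$-coloring of $G-A$ into parts of size $3k$ each, and conversely any such partition is semi-equitable of this type. Hence a certificate is simply a partition $V=A\cup B\cup C$ into independent sets with $|A|=4k$ and $|B|=|C|=3k$, which is verifiable in linear time. Equivalently, the question is whether $G$ possesses an independent set $A$ of size $4k=2n/5$ whose deletion leaves a \emph{balanced} bipartite graph $G-A$, with both parts of size $3k$. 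This reformulation is the starting point for everything that follows.

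For the hardness part, the plan is to reduce from a known NP-complete problem and to realize its instances as tricubic graphs. Given the arithmetic hidden in the constant $10=4+3+3$ and the rigid balance $|B|=|C|$, I would reduce from a local-consistency problem with a global balancing requirement, e.g.\ a suitably restricted satisfiability problem (such as NAE- or $1$-in-$3$-SAT on cubic incidence structures), whose NP-completeness is classical. Each variable of the source instance is replaced by a small cubic gadget on a multiple of ten vertices, engineered so that in every proper $3$-coloring the gadget contributes to the large class $A$ exactly one of two admissible amounts, corresponding to the two truth values. Clause and consistency gadgets, again cubic, are attached so that only locally consistent colorings survive, and the pieces are wired together (padding with copies of small tricubic graphs such as the prism $P$ of Fig.~\ref{rys2} where necessary) so that the total number of vertices is $n$ with $10\mid n$ and the global demand on $A$ is exactly $4k$.

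The correctness argument has two directions. If the source instance is a yes-instance, one assembles the local gadget colorings into a global proper $3$-coloring and checks, from the way the surpluses were balanced, that the class sizes are exactly $(4k,3k,3k)$, giving the required semi-equitable coloring. Conversely, any semi-equitable coloring of type $(4k,3k,3k)$ of the constructed graph must, by a counting argument applied gadget by gadget (each gadget can over- or under-contribute to $A$ only within the tight margins built into it), induce a locally admissible choice everywhere, and the equality $|B|=|C|=3k$ forces these choices to be globally consistent, yielding a solution of the source instance. Throughout I must preserve two invariants: every vertex has degree exactly $3$, so that $G$ is genuinely cubic, and $G$ is $3$-chromatic, i.e.\ $G\neq K_4$ and $G$ is non-bipartite so that $G\in\mathcal{Q}_3(n)$; the latter is secured by embedding at least one odd structure, again a prism, in the construction.

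The main obstacle is the simultaneous satisfaction of three rigid constraints: cubicness, $3$-chromaticity, and the exact global counts $(4k,3k,3k)$. Unlike an ordinary colorability reduction, here the hardness lives entirely in the cardinalities, since by Chen et al.~\cite{clw} every tricubic graph is already equitably $3$-colorable; the reduction must therefore make the prescribed unequal sizes, and in particular the balance $|B|=|C|$ of the bipartite remainder, the only place where the combinatorial difficulty can hide. Designing gadgets whose color-class contributions are pinned to a two-valued choice while keeping every degree equal to $3$, and verifying that no ``cheating'' coloring can redistribute vertices among $A$, $B$, $C$ to fake a solution, is the technically delicate heart of the proof.
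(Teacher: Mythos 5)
Your proposal is not a proof; it is a plan for one. Everything that would constitute the actual mathematical content --- the variable and clause gadgets, their wiring, the counting argument pinning each gadget's contribution to the class $A$, and the verification that no coloring can redistribute vertices among the classes to ``cheat'' --- is left unconstructed; you yourself flag the gadget design as the technically delicate heart of the proof and then stop there. Note also that the present paper does not prove Lemma \ref{lm1} at all: the result is quoted from \cite{fk}, which is why the statement ends with $\Box$ and no proof follows. The only checkable content in your text is the NP-membership observation and the reformulation of the question as the existence of an independent set $A$ with $|A|=4k$ whose removal leaves a balanced bipartite graph; both are correct, but neither is the hard part.

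More importantly, your guiding intuition about where the hardness lives points the wrong way, and the machinery assembled in Section \ref{sec3} of this paper shows why. By the bipartization theorem of \cite{fkr} quoted there, if $G\in\mathcal{Q}_3(n)$ satisfies $\alpha(G) \geq 0.4n$, then $G$ has an independent set $I$ of size exactly $0.4n = 4k$ with $G-I$ bipartite (the admissible range $\lfloor(n-\alpha(G))/2\rfloor \leq k' \leq \alpha(G)$ contains $0.4n$); and the argument given after Procedure 2 shows that such a $G-I$ is then automatically \emph{equitably} $2$-colorable, i.e.\ splits into classes of size $3k$ and $3k$. Hence $G$ has a semi-equitable coloring of type $(4k,3k,3k)$ if and only if $\alpha(G) \geq 2n/5$: the balance $|B|=|C|$, which you expected to carry the combinatorial difficulty and to force global consistency of your gadget choices, in fact comes for free, and the entire difficulty is the threshold question on the independence number. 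That is essentially how \cite{fk} proceeds: the equivalence above reduces the coloring problem to deciding $\alpha(G) \geq 2n/5$ for cubic graphs, whose NP-completeness is obtained from the classical independent set problem on cubic graphs, with no SAT gadgets, no $3$-chromaticity bookkeeping, and no class-size accounting needed. A gadget-based reduction in your style would additionally have to contend with the fact that $B$ and $C$ play perfectly symmetric roles, so a local gadget cannot ``read'' the global balance constraint; it is doubtful that the consistency-forcing step you describe could be made to work as stated, and in light of the equivalence it is in any case unnecessary.
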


Now we are ready to prove
\begin{theorem}
The $Q3|p_i=1,G \in \mathcal{Q}_3(n)|C_{\max}$ problem is NP-hard even if $s_1> s_2= s_3$.
\end{theorem}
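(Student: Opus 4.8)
The plan is to prove NP-hardness by a reduction from the decision problem established in Lemma~\ref{lm1}, namely deciding whether a graph $G \in \mathcal{Q}_3(n)$ with $10 \mid n$ admits a semi-equitable coloring of type $(4k, 3k, 3k)$ where $k = n/10$. The strategy is to choose machine speeds so that the unique optimal schedule length is achievable \emph{only} when $G$ has a coloring whose color-class sizes match this prescribed type. In other words, I want to engineer $s_1 > s_2 = s_3$ such that the threshold on $C_{\max}$ forces the largest independent set to have exactly $4k$ vertices and the other two to have $3k$ each.

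\medskip\noindent\textbf{Setting the speeds.} First I would work out the load balancing. With $n = 10k$ jobs and the color classes $(I_1, I_2, I_3)$ assigned to $(M_1, M_2, M_3)$, the processing times are $|I_1|/s_1$, $|I_2|/s_2$, $|I_3|/s_3$. I would like the ideal (perfectly balanced) distribution to be exactly $(4k, 3k, 3k)$, which forces $s_1 : s_2 : s_3 = 4 : 3 : 3$. So I would set, say, $s_1 = 4$ and $s_2 = s_3 = 3$; then a $(4k,3k,3k)$-coloring gives $C_{\max} = 4k/4 = 3k/3 = k$, and this is clearly the best possible since $k$ equals the average load $n/s = 10k/10$. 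Any schedule must have $C_{\max} \geq n/s = k$, so $C_{\max} = k$ is optimal and is attained precisely by an assignment realizing color-class sizes $(4k,3k,3k)$.

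\medskip\noindent\textbf{The equivalence.} The core of the argument is then to show $C_{\max} = k$ is achievable \emph{if and only if} $G$ has a semi-equitable coloring of type $(4k,3k,3k)$. The ``if'' direction is immediate from the construction above. For ``only if,'' I would argue that because loads are integer-valued and $s_1=4$, $s_2=s_3=3$, any schedule with $C_{\max}= k$ must place at most $4k$ jobs on $M_1$ and at most $3k$ jobs on each of $M_2, M_3$; since the three bounds sum to exactly $n = 10k$, each must be met with equality, forcing the class sizes to be exactly $(4k,3k,3k)$. Note also that $(4k,3k,3k)$ is a semi-equitable type in the sense of the definition given earlier, since $G - I_1$ is equitably $2$-colored into two classes of size $3k$ while $|I_1| = 4k \notin \{\lfloor n/3\rfloor, \lceil n/3\rceil\} = \{3k, 4k\}$ — here I should double-check the boundary: with $n = 10k$ we have $n/3 = 10k/3$, so $\lfloor n/3 \rfloor = 3k$ and $\lceil n/3 \rceil = 4k$ when $3 \nmid k$, which keeps $4k$ just outside the equitable range only under the right divisibility, matching the hypothesis of Lemma~\ref{lm1}.

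\medskip\noindent The main obstacle I anticipate is precisely this last bookkeeping point: verifying that the target partition genuinely qualifies as \emph{semi-equitable} (as opposed to equitable) under the exact arithmetic of $n = 10k$, and confirming that no \emph{other} integer load distribution can also achieve $C_{\max} = k$ and thereby break the equivalence. Since the speeds are fixed integers and the per-machine capacities at load $k$ are exactly $4k, 3k, 3k$ with no slack, this is a clean counting argument, and I expect the reduction to be polynomial (indeed linear) in the size of $G$. The reduction therefore transfers the NP-completeness of the semi-equitable coloring decision problem to the NP-hardness of our scheduling problem, completing the proof.
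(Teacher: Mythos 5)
Your proposal is correct and follows essentially the same route as the paper: a reduction from Lemma~\ref{lm1} using speeds in the ratio $4:3:3$ (the paper uses $s_1=4/3$, $s_2=s_3=1$ with threshold $3k$, which is the same instance up to rescaling of time units), with the threshold chosen so that the per-machine capacities $4k,3k,3k$ sum exactly to $n$, forcing any feasible schedule to realize a coloring of type $(4k,3k,3k)$. Your explicit counting argument for the ``only if'' direction is in fact slightly more detailed than the paper's one-line assertion, and the side worry about whether $(4k,3k,3k)$ is formally semi-equitable is immaterial to the reduction (your arithmetic there is off --- $\lceil n/3\rceil=4k$ only when $k=1$ --- but Lemma~\ref{lm1} concerns colorings of that exact type, so the equivalence you established is all that is needed).
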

\begin{proof}
In the proof we will use a reduction of the coloring problem from Lemma \ref{lm1} to our scheduling problem.

So suppose that we have a tricubic graph $G$ on $n=10k$ vertices and we want 
to know whether there exists a $(4k,3k,3k)$-coloring of $G$. Given such an instance we construct the following instance for a scheduling 
decision problem: machine speeds for $M_1, M_2,$ and $M_3$ are $s_1 = 4/3, s_2= s_3= 1$ and the limit on schedule length is $3k$. The question is 
whether there is a schedule of length at most $3k$? The membership of this problem in class NP is obvious.
            
If there is a schedule of length $\leq 3k$ then it is of length exactly $3k$ since it cannot be shorter. Such a schedule implies the existence of a 
semi-equitable coloring of $G$ of type 
$(4k,3k,3k)$. 

If $G$ has a coloring of type $(4k,3k,3k)$ then our scheduling problem has clearly a solution of length $3k$. 

The NP-hardness of $Q3|p_i=1,G \in \mathcal{Q}_3(n)|C_{\max}$ follows from the fact that its decision version is NP-complete. 
\end{proof}

Since our scheduling problem is NP-hard, we have to propose an approximation algorithm for it.

\floatname{algorithm}{Algorithm}
\setcounter{algorithm}{1}
\begin{algorithm}
\caption{Scheduling of tricubic graphs}
\begin{algorithmic}
\Require {Graph $G \in \mathcal{Q}_3(n)$, $G \neq P$ and machine speeds $s_1, s_2, s_3$ such that $s_1 > s_2 = s_3$.}
\Ensure {Suboptimal schedule.}
\begin{enumerate}
\item Apply procedure \texttt{Greedy} (described in \texttt{Procedure 2}) to find an independent set $I$ of $G$. 
If $|I| < 0.4n$ then go to Step 5.
\item If $G-I$ is not bipartite then apply procedure \texttt{FKR} (cf. \cite{fkr}) to get an independent set $A$, $|A|=|I|$, which bipartizes $G$ and 
put $I = A$.
\item Find an equitable 2-coloring $(B, C)$ of $G-I$.
\item If $s_1 \geq 2s_2$ then assign $M_1 \leftarrow I$, $M_2 \leftarrow B$, $M_3 \leftarrow C$ and stop else go to Step 6. 
\item Find any 3-coloring of $G$ (cf.\cite{skul}) and apply procedure \texttt{CLW} in order to obtain an 	equitable coloring $(A, B, C)$ of $G$. 
Go to Step 9.
\item Calculate approximate numbers of jobs  $(n_1, n_2, n_3)$ to be processed on $M_1, M_2, M_3$ in an ideal schedule, as follows:
$$n_1 = ns_1/s, n_2 =  n_3 = ns_2/s, \text{ where } s = s_1+ s_2+ s_3.$$
\item Verify which of the following types of colorings guarantees a better legal solution:
$$(\lfloor n_1\rfloor,\lceil n_2 \rceil, n-\lfloor n_1\rfloor- \lceil n_2 \rceil) \text{ or }
(\lceil n_1 \rceil, \lfloor n_2\rfloor, n-\lceil n_1 \rceil -\lfloor n_2\rfloor) $$

If this is the first type then let $n^* = \lfloor n_1\rfloor$ else $n^*= \lceil n_1 \rceil$.
\item If $n^* < |I|$  then  apply a modified  \texttt{CLW}  procedure to obtain a semi-equitable  coloring $(A, B, C)$ of $G$, where $|A| = n^*$.
\item Assign $M_1 \leftarrow A$, $M_2 \leftarrow B$, $M_3 \leftarrow C$.
\end{enumerate}
\end{algorithmic}
\end{algorithm}

Procedure \texttt{Greedy} repeatedly chooses a vertex $v$ of minimum degree, adds it to its current independent set and then deletes $v$ and all its 
neighbors. Its complexity is linear. The following \texttt{Procedure 2} gives a more formal description of it.
\floatname{algorithm}{Procedure}
\setcounter{algorithm}{1}
\begin{algorithm}
\caption{\texttt{Greedy}}
\begin{algorithmic}
\Require {Graph $G\in \mathcal{Q}_3(n)$.}
\Ensure {Independent set $I$ of $G$.}
\begin{enumerate}
\item Set $I = \emptyset$.
\item While $V(G) \neq \emptyset$ do
		
set $G = G-N[v]$ and $I = I \cup \{v\}$, where $v$ is a minimum degree vertex in $G$ and	$N[v]$ is its closed neighborhood.
\end{enumerate}
\end{algorithmic}
\end{algorithm}

Note that \texttt{Greedy} does not guarantee that $G-I$ is bipartite. It may happen that there remain some odd cycles in the subgraph, even if a 
big independent set is found. An example of such situation is given in Fig.~\ref{rys3}. Nevertheless, the authors proved in \cite{fkr} that 
given a graph $G \in \mathcal{Q}_3(n)$ with $\alpha(G) \geq 0.4n$, there exists an independent set $I$ of size $k$ in $G$ such that $G-I$ is 
bipartite for $\lfloor(n- \alpha(G))/2 \rfloor \leq k \leq \alpha(G)$.

\begin{figure}[htb]
\begin{center}
\includegraphics[scale=0.5]{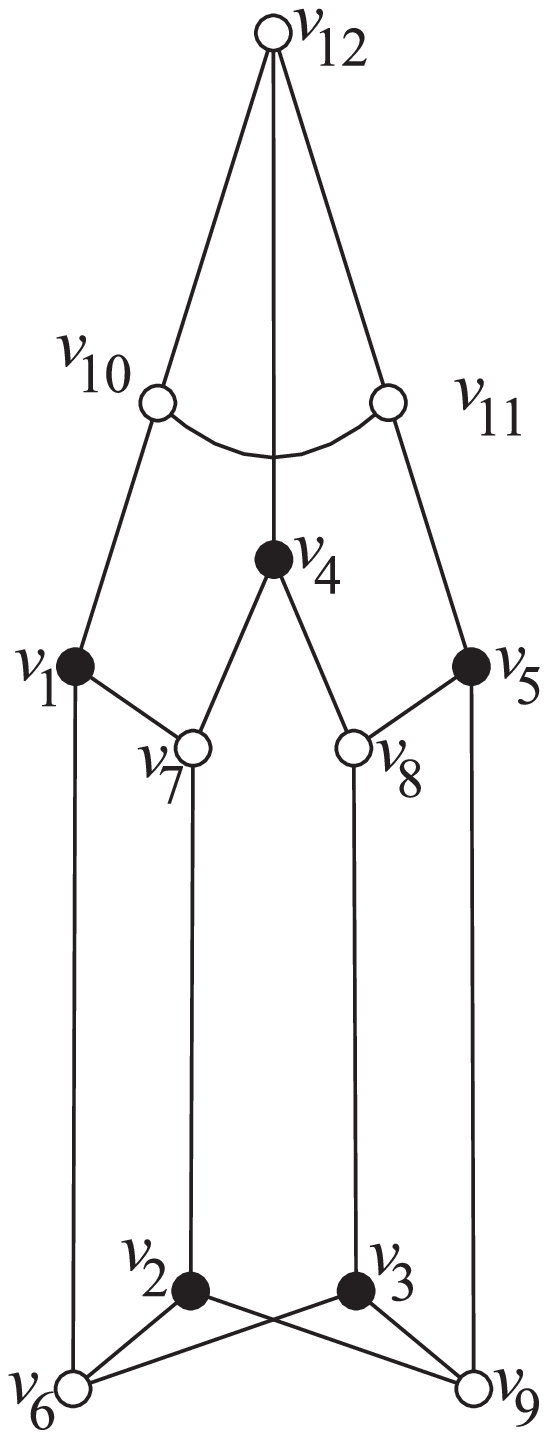} 
\caption{Graph $G$ for which the \texttt{Greedy} procedure (with ties broken by choosing the vertex with smallest index) finds an independent set $I$ 
(vertices in black) such that $G-I$ contains $K_3$.}
\label{rys3}
\end{center}
\end{figure}

Now we have to prove that if independent set $|I| \geq 0.4n$ and $G-I$ is bipartite then $G-I$ is equitably 2-colorable. Indeed, assume that $|I|=0.4n$. Notice that 
$0.6n$ vertices of $G-I$ induce binary trees (some of them may be trivial) and/or graphs whose 2-core is equibipartite (even cycle possibly 
with chords). Note that deleting an independent set $I$ of cardinality $0.4n$ from a cubic graph $G$ means also that we remove 
$1.2n$ edges from the set of all $1.5n$ edges of $G$. The resulting graph $G-I$ has $0.6n$ vertices and $0.3n$ edges. Let $d_i$, $0 \leq i \leq 3$, be 
the number of 
vertices 
in $G-I$ of degree $i$. Certainly, $d_0+\ldots+d_3=0.6n$. Since the number of edges is half of the number of vertices, the number of isolated vertices, 
$d_0$, is 
equal to $d_2 + 2d_3$. If $d_0=0$, then $G-I$ is a perfect matching and its equitable coloring is obvious. 
Suppose that $d_0 > 0$. Let $P$ denote the set of isolated vertices in $G-I$. Let us consider subgraph $G-I-P$. Each vertex of degree 3 causes the difference between cardinalities 
of color classes $\leq 2$, similarly each vertex of degree 2 causes the difference at most 1. The difference between the cardinalities of color classes in any 
coloring fulfilling these conditions does not exceed  $d_2+2d_3$ in $G-I-P$. Thus, the appropriate assignment of 
colors to isolated vertices in $P$ makes the whole graph $G-I$ equitably 2-colored. 
Therefore, an equitable coloring of $G-I$ required in Step 3 of \texttt{Algorithm 1} can be obtained as follows. First we color non-isolated vertices greedily by using for example a \texttt{DFS} 
method. In the second phase we color 
isolated 
vertices with this color that has been used fewer times in the first phase. This can be accomplished in $O(n)$ time.

However, the most time consuming is Step 2, where the \texttt{FKR} procedure is invoked. This procedure is too complicated to be described here. The general idea is as follows: given 
$G\in \mathcal{Q}_3(n)$ and an independent set $I$ of size at least $0.4n$ such that $G-I$ is 3-chromatic, we transform it step by step into an independent set $I'$ such that $|I'| = |I|$ 
and $G-I'$ is 2-chromatic (see \cite{fkr} for details). Since one step of swapping two vertices between $I$ and $V-I$ requires $O(n^2)$ time, the complexity of  \texttt{FKR} is $O(n^3)$.

The above considerations lead us to the following
\begin{theorem}
\emph{\texttt{Algorithm 2}} runs in $O(n^3)$ time. \hfill $\Box$  
\end{theorem}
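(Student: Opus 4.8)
The plan is to bound the running time of each of the nine steps of \texttt{Algorithm 2} separately and then observe that the total is governed by a single dominant step. Since the algorithm is essentially a straight-line sequence of procedure calls, with only conditional jumps and no outer loop over the whole instance, its overall complexity is just the sum -- equivalently, the maximum -- of the complexities of the individual steps. So the whole proof reduces to a careful bookkeeping of per-step costs.

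First I would dispose of the inexpensive steps. Step 1 invokes \texttt{Greedy}, which the text has already noted runs in linear time, $O(n)$. Step 3 constructs an equitable 2-coloring of the bipartite graph $G-I$; the two-phase \texttt{DFS} construction described just above the theorem (color the non-isolated vertices greedily, then place the isolated ones in the less-used class) accomplishes this in $O(n)$. Steps 4, 6 and 7 perform only $O(1)$ arithmetic and comparisons, together with at most an $O(n)$ feasibility check, and the final assignment in Step 9 is $O(n)$. Step 5 first finds a proper 3-coloring of the cubic graph $G$, which can be done in linear time by the Brooks-type procedure of \cite{skul}, and then calls \texttt{CLW} to equalize the color classes; by the discussion of the original \texttt{CLW} procedure in Section \ref{sec2}, this costs $O(n^2)$. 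Likewise, the modified \texttt{CLW} procedure invoked in Step 8 runs in $O(n^2)$. Hence every step other than Step 2 costs $O(n^2)$ or less.

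The dominant contribution comes from Step 2, where \texttt{FKR} is called to transform the independent set $I$ (with $|I| \geq 0.4n$) into an independent set of the same cardinality whose removal leaves a bipartite graph. As recalled immediately before the theorem, a single swap of a pair of vertices between $I$ and $V-I$ costs $O(n^2)$, and the procedure performs $O(n)$ such swaps, so \texttt{FKR} runs in $O(n^3)$. Summing the per-step bounds then gives $O(n^3)+O(n^2)+O(n)=O(n^3)$, which is the claimed running time.

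The only genuine point to verify is that Step 2 really does dominate, i.e. that no other step conceals a cost exceeding $O(n^2)$. This reduces to confirming the $O(n)$ bound for the 3-coloring subroutine of \cite{skul} and the $O(n^2)$ bound for \texttt{CLW} that was already established in Section \ref{sec2}. I expect this bookkeeping -- in particular pinning down the subroutine complexities cited from \cite{skul} and \cite{fkr} -- to be the main (and essentially only) obstacle; once the individual bounds are in hand, the conclusion is immediate.
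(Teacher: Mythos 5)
Your proposal is correct and follows essentially the same route as the paper: the paper's justification is precisely the step-by-step accounting given in the text preceding the theorem (linear \texttt{Greedy}, $O(n)$ equitable 2-coloring of $G-I$, $O(n^2)$ for the \texttt{CLW}-based steps), with Step 2's \texttt{FKR} procedure dominating at $O(n^3)$ because each of the $O(n)$ vertex swaps between $I$ and $V-I$ costs $O(n^2)$. Nothing is missing.
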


Now we shall prove two fact concerning the performance guarantees for \texttt{Algorithm 2}.
\begin{theorem}
\emph{\texttt{Algorithm 2}} returns a solution of value less than  $\frac{4}{3}C^*_{\max}$.
\end{theorem}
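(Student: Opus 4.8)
The plan is to compare the makespan $C$ of the schedule produced by \texttt{Algorithm 2} against two lower bounds on $C^*_{\max}$, and to split the analysis according to which of the three possible outputs the algorithm returns. The first is the volume bound $C^*_{\max}\ge n/s$ (with $s=s_1+2s_2$, since $s_2=s_3$), obtained from $n=\sum_i|I_i|=\sum_i s_i(|I_i|/s_i)\le s\,C^*_{\max}$. The second, and more important, bound exploits the colouring constraint: in any legal schedule the class placed on $M_1$ is independent, hence has at most $\alpha(G)$ vertices, so the two equal-speed machines $M_2,M_3$ share at least $n-\alpha(G)$ jobs and the busier of them gets at least $\lceil(n-\alpha(G))/2\rceil$; thus $C^*_{\max}\ge\lceil(n-\alpha(G))/2\rceil/s_2$. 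Here I would invoke the structural fact that a tricubic graph is non-bipartite: if an independent set had size $n/2$, then counting the $3n/2$ edges leaving it shows its complement would also be independent, making $G$ bipartite. Hence $\alpha(G)\le n/2-1$, and the slow-machine bound sharpens to $C^*_{\max}\ge\lceil(n+2)/4\rceil/s_2\ge(n+2)/(4s_2)$.

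With these in hand I would treat the three outputs separately. (i) If the algorithm stops at Step~4 it returns $(I,B,C)$ with $|I|\ge\lceil0.4n\rceil$, $s_1\ge 2s_2$, and $(B,C)$ equitable on $G-I$. The fast load is $|I|/s_1\le(n/2)/(2s_2)<C^*_{\max}$, while the slow load is $\lceil(n-|I|)/2\rceil/s_2\le\lceil0.3n\rceil/s_2$, which is below $\tfrac43$ times the slow-machine bound. (ii) If the algorithm returns an equitable colouring (Step~5), the largest class, placed on $M_1$, contributes $\lceil n/3\rceil/s_1$, and using $s_2<s_1$ one checks $(n+2)(s_1+2s_2)<4ns_1$ for $n\ge 6$, i.e.\ $\lceil n/3\rceil/s_1<\tfrac43(n/s)$; the two classes off $M_1$ total $\lfloor 2n/3\rfloor$, so the busier one has at most $(n+1)/3$ vertices, and $(n+1)/3<(n+2)/3=\tfrac43\cdot(n+2)/4$ bounds the slow load by $\tfrac43 C^*_{\max}$. (iii) If the algorithm reaches Steps~6--9 (so $s_2<s_1<2s_2$ and $|I|\ge\lceil0.4n\rceil$) it balances the loads by placing $n^*\in\{\lfloor n_1\rfloor,\lceil n_1\rceil\}$ jobs on $M_1$, where $n_1=ns_1/s$; the floor rounding already makes every load at most $n/s+1/s_2$, so by the volume bound $C/C^*_{\max}\le 1+s/(ns_2)<1+4/n$ since $s<4s_2$. (If instead $n^*\ge|I|$, the output is again $(I,B,C)$ and the slow-load argument of case (i) applies.)

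I expect the equitable-colouring branch (ii) to be the main obstacle, since this is exactly where the guarantee is asymptotically tight: the slow load $(n+1)/3$ against the lower bound $(n+2)/4$ tends to $\tfrac43$ as $n\to\infty$. The whole theorem therefore rests on the slack given by the \emph{strict} inequality $\alpha(G)\le n/2-1$; with only $\alpha(G)\le n/2$ one would obtain the non-strict $\tfrac43$. The remaining nuisance is integrality for small $n$: the estimates in cases (i) and (iii) carry an additive $O(1/n)$ error, so a finite set of small even $n$ must be verified directly. This is harmless, because for the smallest graphs the hypothesis $|I|\ge\lceil0.4n\rceil$ is incompatible with $\alpha(G)\le n/2-1$ (it forces $|I|\ge 4>\alpha(G)$ when $n=8$), pushing those graphs into branch (ii), and the few genuinely remaining instances are settled by hand using that Step~7 always selects the better of the two roundings.
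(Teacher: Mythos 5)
Your proof is correct, and it is built from the same two ingredients as the paper's: the volume bound $C^*_{\max}\geq n/s$ and the slow-pair bound $C^*_{\max}\geq \lceil (n+2)/4\rceil/s_2$, both resting on the fact that a tricubic graph has $\alpha(G)<n/2$ (the paper cites its inequality (2) with the remark that the upper bound is tight only for bipartite graphs; your edge-counting argument proves exactly this). The difference is organizational but not cosmetic. The paper splits on the speed regime ($s_1\geq 2s_2$ versus $s_1<2s_2$) and in both regimes replaces the algorithm's output by the single worst-case bound $\mathrm{Alg}_2(G)\leq \lfloor (n+1)/3\rfloor/s_2$ attributed to an equitable coloring; you split on the branch the algorithm actually executes and bound each machine's load in each branch. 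Your version buys rigor at two places where the paper is loose: first, an equitable coloring also loads $M_1$ with $\lceil n/3\rceil$ jobs, and for $s_2<s_1<4s_2/3$ (e.g. $n=10$, classes $4,3,3$) the load $\lceil n/3\rceil/s_1$ exceeds $\lfloor(n+1)/3\rfloor/s_2$, so the paper's blanket upper bound on $\mathrm{Alg}_2$ does not hold verbatim --- your case (ii) checks the $M_1$ load separately against the volume bound; second, in the regime $s_1<2s_2$ the paper divides by $(n+1)/s$, a lower bound on $C^*_{\max}$ it never establishes (volume gives only $n/s$, which is attainable), while your case (iii) works with $n/s$ and the explicit roundings $n^*\in\{\lfloor n_1\rfloor,\lceil n_1\rceil\}$. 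The price of your route is the additive $1/s_2$ slack, i.e. the $1+4/n$ ratio, which is $<4/3$ only for $n\geq 14$; your small-$n$ discussion is sound ($n=8$ cannot reach Steps 6--9 since it would force $|I|\geq 4>\alpha(G)$, and for $n\in\{10,12\}$ the branch terminates either with the coloring $(I,B,C)$ of type $(4,3,3)$ resp. $(5,4,3)$ or with an equitable one, each within the bound). One point you should tighten: in the sub-case $n^*\geq|I|$ of case (iii) you invoke only ``the slow-load argument of case (i)'', but the fast load also needs an argument there, since case (i)'s estimate $|I|/s_1\leq n/(4s_2)$ used $s_1\geq 2s_2$; instead use $|I|\leq n^*\leq n_1+1$ to get $|I|/s_1\leq n/s+1/s_1$, which folds this sub-case into your $1+4/n$ bound and the same finite check.
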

\begin{proof}
Let Alg$_2(G)$ be the length of a schedule produced by \texttt{Algorithm 2} when applied to incompatibility graph $G$, and let $C^*_{\max}(G)$ be the length of an optimal schedule. 

If $s_1 \geq 2s_2$ then it is natural to load as many jobs as possible on the fastest batch machine $M_1$. By inequality (2) the maximal possible number of jobs on $M_1$ is less than $n/2$. 
Therefore, the schedule length on $M_1$ is less than $\frac{1}{2}n/s_1 \leq \frac{1}{4}n/s_2$. In an optimal solution the remaining jobs must be split evenly between $M_2$ and $M_3$ (Step 3). This means 
that such a schedule cannot be shorter than $\lceil(n+1)/4\rceil/s_2$ on $M_2$. Hence $C^*_{\max}(G) \geq \lceil(n+1)/4\rceil/s_2$. 
On the other hand, in the worst case \texttt{Algorithm 2} returns a schedule corresponding to an equitable coloring of $G$ (Step 5), which means that 
Alg$_2(G) \leq \lfloor(n+1)/3\rfloor/s_2$. Therefore
$$
\frac{\text{Alg}_2(G)}{C^*_{\max}(G)} \leq \frac{\lfloor(n+1)/3\rfloor/s_2}{C^*_{\max}(G)} \leq \frac{\lfloor(n+1)/3\rfloor/s_2}{\lceil(n+1)/4\rceil/s_2}
<\frac{(n+1)/3}{(n+1)/4} = \frac{4}{3}
$$

If $s_1 < 2s_2$ then the faster $M_1$ performs the bigger difference between the worst and best case is. In the worst case $s_1\cong 2s_2$. 
Then the length of optimal schedule is less than $(n+1)/s$. As previously, at worst our algorithm produces a schedule based on equitable 
coloring of $G$ whose length is at most $\lfloor(n+1)/3 \rfloor/s_2$. Hence Alg$_2(G)\leq \lfloor(n+1)/3\rfloor/s_2$ and
$$
\frac{\text{Alg}_2(G)}{C^*_{\max}(G)} \leq \frac{\lfloor(n+1)/3\rfloor/s_2}{C^*_{\max}(G)} \leq \frac{\lfloor(n+1)/3\rfloor/s_2}{(n+1)/s}\leq
\frac{s}{3s_2}<\frac{4s_2}{3s_2} = \frac{4}{3}
$$
and the thesis of the theorem follows. 
\end{proof}

\begin{theorem}
If  $3s_1/4 \leq s_2=s_3$ then \emph{\texttt{Algorithm 2}} almost always returns an optimal solution.
\end{theorem}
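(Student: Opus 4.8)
The plan is to interpret ``almost always'' in the standard sense for cubic graphs: in the uniform random model on $n$ vertices, the probability that \texttt{Algorithm 2} outputs an optimal schedule tends to $1$ as $n\to\infty$. The conceptual point is that the hypothesis $3s_1/4\le s_2=s_3$ aligns the optimal load on the fast machine with what \texttt{Greedy} and \texttt{FKR} can deliver. Writing $s=s_1+2s_2$, the ideal fast load is $n_1=ns_1/s$, and since $s_1\mapsto s_1/(s_1+2s_2)$ increases from $\tfrac13$ at $s_1=s_2$ to $0.4$ at $s_1=\tfrac43 s_2$, we get $n/3<n_1\le 0.4n$. Hence the integer target $n^*\in\{\lfloor n_1\rfloor,\lceil n_1\rceil\}$ of Step 7 satisfies $n^*\le 0.4n$ (up to one unit of rounding at the extreme $s_1=\tfrac43 s_2$); and because $s_1<2s_2$, execution runs through Steps 1--3 and then branches to Steps 6--8.

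I would first prove the deterministic claim: \texttt{Algorithm 2} is optimal \emph{whenever} \texttt{Greedy} returns a set with $|I|\ge 0.4n$. For the lower bound on $C^*_{\max}$, take any proper $3$-colouring; since $M_2$ and $M_3$ are identical, an exchange argument shows it is best to place the largest colour class, of size $a\ge n/3$, on $M_1$ and to balance the other two, so its makespan is at least $\max\{a/s_1,\lceil(n-a)/2\rceil/s_2\}$. This function of $a$ has one increasing and one decreasing branch, so over integers $a\ge n/3$ its minimum $V^*$ is attained at $a=n^*$; thus $C^*_{\max}\ge V^*$. For the matching upper bound, $|I|\ge 0.4n$ forces $\alpha(G)\ge|I|\ge 0.4n$, so \texttt{FKR} applies, and $n^*\le 0.4n\le\alpha(G)$ together with $n^*\ge n/3>\lfloor(n-\alpha(G))/2\rfloor$ put $n^*$ inside the admissible range of \texttt{FKR}. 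Consequently \texttt{FKR} yields a bipartizing independent set and the modified \texttt{CLW} step of Step 8 produces a semi-equitable colouring of type $(n^*,\lceil(n-n^*)/2\rceil,\lfloor(n-n^*)/2\rfloor)$; assigning it in Step 9 gives makespan exactly $V^*$, hence optimality.

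It remains to show the single probabilistic fact, which is the hard part: for almost all cubic graphs the minimum-degree \texttt{Greedy} of \texttt{Procedure 2} returns an independent set with $|I|>0.4n$. I would derive this from the differential-equation (Wormald-type) analysis of the degree-greedy independent-set process on random regular graphs, which shows that on random cubic graphs the process almost surely outputs a set of size $cn$ for a constant $c$ strictly larger than $0.4$ (numerically $c\approx 0.43$); in particular $\alpha(G)\ge cn$ almost surely. The strict gap $c>0.4$ is precisely what absorbs the rounding at the boundary $s_1=\tfrac43 s_2$, so that almost surely Step 1's threshold is cleared and $n^*\le 0.4n<|I|$, making the reduction in Step 8 feasible. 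On this event of probability $\to 1$ the algorithm never drops to the equitable branch of Step 5 (which is strictly suboptimal here, as $s_1>s_2$), so by the previous paragraph it is optimal.

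The chief obstacle is therefore analytic rather than combinatorial: one must establish that the random-graph performance of \texttt{Greedy} is (i) at least $0.4n$, (ii) attained almost surely, and (iii) robust to the tie-breaking rule. I would handle (i)--(iii) through the standard concentration for bounded-degree vertex-exposure processes, observing that a fixed deterministic tie-break perturbs only lower-order terms; every scheduling ingredient then follows from the deterministic optimality proved above.
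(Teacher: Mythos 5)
Your proposal is correct and follows essentially the same route as the paper: both arguments rest on the almost-sure bound $|I| \geq 0.432n - \epsilon n$ for \texttt{Greedy} on random cubic graphs (the ``Wormald-type'' differential-equation result you invoke is precisely the cited Frieze--Suen theorem), combined with the observation that $3s_1/4 \leq s_2 = s_3$ forces $n^* \leq 0.4n < |I|$, so that \texttt{Algorithm 2} passes through Steps 1--3 and Step 8 and outputs a semi-equitable coloring of type $\bigl(n^*, \lceil (n-n^*)/2 \rceil, \lfloor (n-n^*)/2 \rfloor\bigr)$. The only difference is one of completeness: you make explicit the deterministic optimality of that schedule via the lower bound $C^*_{\max} \geq \min_a \max\{a/s_1, \lceil (n-a)/2 \rceil / s_2\}$ attained at $a = n^*$, a step the paper's proof leaves implicit.
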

\begin{proof}
Frieze and Suen \cite{friez} showed that procedure \texttt{Greedy} finds an independent set of size $|I| \geq 0.432n - \epsilon n$ in almost all 
cubic graphs on $n$ vertices, where $\epsilon$ is any constant greater than 0. Notice that if it is really the case then $n^* < |I|$. 
Therefore \texttt{Algorithm 2} at first finds in Steps 2 and 3 a semi-equitable coloring of type $(|I|, \lceil(n-|I|)/2\rceil, \lfloor(n-|I|)/2\rfloor)$ 
and then transforms it into a semi-equitable coloring of type $(n^*, \lceil(n-n^*)/2\rceil, \lfloor(n-n^*)/2\rfloor)$ in Step 8. 
This completes the proof.   
\end{proof}

\section{Final remarks}

Can our results be generalized without changing the complexity status of the scheduling problem? The answer is $\ldots$ sometimes. Let us consider bicubic graphs for example. If arbitrary 
job lengths are allowed then the problem $Q3|G \in \mathcal{Q}_2|C_{\max}$
becomes NP-hard even if $s_1 = 2s_2 = 2s_3$. In fact, let $I_1, I_2$ be a decomposition of $G$ and suppose that the processing time 
$P_1(I_1) = 2P_2(I_2)$. Then all the jobs of $I_1$ should be assigned to $M_1$, which results in a schedule of 
length $P_1(I_2)$ on machine $M_1$. This schedule length equals $C^*_{\max}$ if and only if there is partition 
of the remaining jobs. Thus a solution to our scheduling problem solves an NP-complete 
PARTITION problem.

On the other hand, if all $n$ jobs are identical but $G$ is disconnected bicubic and $K_{3,3}$-free then 
\texttt{Algorithm 1} can be modified to obtain an optimal schedule in $O(n^2)$ time. First, we treat
$G = G_1 \cup G_2 \cup \ldots  \cup G_k$ as a connected graph and calculate the color sizes, say $n_1, n_2,$ and $n_3$
$(n_1+n_2+n_3 = n)$, that guarantee an optimal solution for $G$. Next, for each $i = 1,\ldots,k$ we split $G_i$
into independent sets $A_i, B_i$ and $C_i$, so that $\sum_{j=1}^i |B_j|/\sum_{j=1}^i |G_j|$ is as close to $n_2/n$ as possible,   
where $|G_j|$ is the order of subgraph $G_j$. The same should hold for sets $A_i$ and $C_i$ with $n_1/n$ and $n_3/n$, respectively. Similarly, we can extend 
\texttt{Algorithm 2} to deal with disconnected tricubic graphs in $O(n^3)$ time.

\end{document}